\theoremstyle{definition}
\newtheorem{definition}{Definition}[section]
\theoremstyle{lemma}
\newtheorem{theorem}{Theorem}
\newtheorem{lemma}[theorem]{Lemma}\newtheorem{proposition}{Proposition}
\DeclareMathOperator*{\argmin}{arg\,min}
\newcommand{\norm}{\text{Norm}}
\newcommand{\trace}{\text{Trace}}
\begin{document}
\title{On the Optimality of Gauss's Algorithm over Euclidean Imaginary Quadratic
Fields}
\author{ \IEEEauthorblockN{Christian Porter } \IEEEauthorblockA{Department of EEE\\
 Imperial College London\\
 London, SW7 2AZ, United Kingdom \\
 Email: c.porter17@imperial.ac.uk} \and \IEEEauthorblockN{Shanxiang Lyu} \IEEEauthorblockA{College of Cyber Security\\
 Jinan University\\
 Guangzhou 510632, China\\
 Email: s.lyu14@imperial.ac.uk} \and \IEEEauthorblockN{Cong Ling} \IEEEauthorblockA{Department of EEE\\
 Imperial College London\\
 London, SW7 2AZ, United Kingdom \\
 Email: c.ling@imperial.ac.uk}}
\maketitle
\begin{abstract}
In this paper, we continue our previous work on the reduction of algebraic
lattices over imaginary quadratic fields for the special case when
the lattice is spanned over a two dimensional basis. In particular,
we show that the algebraic variant of Gauss's algorithm returns a
basis that corresponds to the successive minima of the lattice in
polynomial time if the chosen ring is Euclidean. 
\end{abstract}

\section{Introduction}

Lattice reduction algorithms over algebraic number fields have attracted
great attention in recent years. In network information theory, efficient
techniques for finding the network coding matrices in compute-and-forward
boil down to designing a lattice reduction algorithm where the direct-sums
are defined by the space of codes \cite{Tunali2015,DBLP:journals/tit/HuangNW18,Vazquez-CastroO14}.
In cryptography, analyzing the bit-level security of ideal lattice
based NTRU or fully homomorphic encryption schemes through a sub-field
algorithm has been shown effective \cite{Kim2017,AlbrechtBD16}.

Since the Lenstra-Lenstra-Lovász (LLL) algorithm is one of the most
celebrated lattice reduction algorithms to date, its extension from
over real/rational space to higher dimensional space has been studied
extensively. Napias first generalised the LLL algorithm to lattices
spanned over imaginary quadratic rings and certain quaternion fields
\cite{Napias1996}; later Fieker, Pohst and Stehle investigated more
fundamental properties of algebraic lattices \cite{Fieker1996,Fieker2010}.
Recently Kim and Lee proposed an efficient LLL algorithm over bi-quadratic
field whose quantization step requires a Euclidean domain \cite{Kim2017},
while our work on LLL over imaginary quadratic fields showed that
a Euclidean domain is needed to make the algorithm convergent \cite{DBLP:lyu2018-ALLL}. 

As a special case of the LLL algorithm for (real/rational) lattices
over two dimensions, conventional Gauss's algorithm has been proved
to return the two vectors corresponding to the successive minima of
the lattice \cite{Micciancio2002}. However, the algebraic analog
of Gauss's algorithm, has not, so far, been analyzed. It remains unknown
whether the algebraic Gauss's algorithm possesses the optimality properties
as its counter-part.

To address this issue, we take a modest step to investigate Gauss's
algorithm over Imaginary Quadratic Fields. When the ring of integers
is a Euclidean domain, we prove that Gauss's algorithm returns a basis
corresponding to the successive minima of an algebraic lattice. This
result is further explained through numerical examples. Specifically,
we show how the algorithm finds the two successive minima when the
domain is Euclidean, and how the algorithm fails to work when it is
non-Euclidean.

\section{Preliminaries}

We begin by defining some familiar concepts in algebraic number theory
and lattice theory. Let $K$ be a complex quadratic extension of $\mathbb{Q}$,
i.e. $K=\mathbb{Q}(\sqrt{-d})$ for some positive square-free integer
$d$ that is not equal to $1$. Then recall the ring of integers of
$K$ (maximal order), $\mathcal{O}_{K}$, is $\mathbb{Z}[\xi]$, where
\[
\xi=\begin{cases}
\sqrt{-d} & \quad\text{if}\hspace{0.5cm}-d\equiv2,3\mod4,\\
\frac{1+\sqrt{-d}}{2} & \quad\text{if}\hspace{0.5cm}-d\equiv1\mod4.
\end{cases}
\]
\begin{definition} A field $K$ is said to be \emph{norm-Euclidean}
if, for all $x\in K$, there exists $q\in\mathcal{O}_{K}$ its ring
of integers such that 
\[
|\norm_{K/\mathbb{Q}}(x-q)|<1,
\]
where $\norm_{K/\mathbb{Q}}$ denotes the algebraic norm of $K$.
We denote the value $\mathcal{M}(K):=\max_{x\in K}\min_{q\in\mathcal{O}_{K}}|\norm_{K/\mathbb{Q}}(x-q)|$
the \emph{Euclidean minimum} of $K$. \end{definition}

\begin{proposition} Let $K=\mathbb{Q}(\sqrt{-d})$ be an imaginary
quadratic field with ring of integers $\mathcal{O}_{K}$. Then we
have 
\[
\mathcal{M}(K)=\begin{cases}
\frac{1+d}{4}\hspace{0.2cm}\text{if}\hspace{0.1cm}-d\equiv2,3\mod4,\\
\frac{(1+d)^{2}}{16d}\hspace{0.2cm}\text{if}\hspace{0.1cm}-d\equiv1\mod4.
\end{cases}
\]

Hence, $K$ is norm-Euclidean if and only if $d\in\{1,2,3,7,11\}$.
\end{proposition}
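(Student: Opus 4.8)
The plan is to compute the Euclidean minimum $\mathcal{M}(K)$ explicitly by reducing the problem to a geometric covering question and then read off for which $d$ the inequality $\mathcal{M}(K)<1$ holds. The starting observation is that the norm $\norm_{K/\mathbb{Q}}(x-q)=|x-q|^2$ is just the squared Euclidean distance in $\mathbb{C}$ between $x$ and the lattice point $q\in\mathcal{O}_K$ (viewing $\mathcal{O}_K$ as a lattice in $\mathbb{C}\cong\mathbb{R}^2$). Hence $\min_{q\in\mathcal{O}_K}|\norm_{K/\mathbb{Q}}(x-q)|$ is the squared distance from $x$ to the nearest lattice point, and $\mathcal{M}(K)=\max_{x\in K}\min_q|x-q|^2$ is (the square of) the covering radius of the lattice $\mathcal{O}_K$.

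The main computation is therefore to find the point in a fundamental domain that is farthest from all lattice points. I would split into the two congruence cases. When $-d\equiv 2,3\bmod 4$, the lattice is the rectangular lattice $\mathbb{Z}+\mathbb{Z}\sqrt{-d}$ with basis $\{1,\sqrt{-d}\}$; the deep hole sits at the center of a fundamental rectangle, namely $x=\tfrac12+\tfrac12\sqrt{-d}$, giving squared distance $\left(\tfrac12\right)^2+\left(\tfrac{\sqrt d}{2}\right)^2=\tfrac{1+d}{4}$, which matches the claimed value. When $-d\equiv 1\bmod 4$, the lattice $\mathbb{Z}+\mathbb{Z}\,\tfrac{1+\sqrt{-d}}{2}$ has a shorter fundamental cell; here I would locate the circumcenter of the triangle formed by three nearest lattice points (the deep hole of the corresponding centered-rectangular lattice) and compute its squared circumradius, which after simplification yields $\tfrac{(1+d)^2}{16d}$. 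Care is needed to confirm that this candidate point is genuinely a deep hole rather than merely a local maximum, i.e. that no lattice point lies closer; this geometric verification is the step I expect to be the main obstacle, since the location of the farthest point depends on the lattice shape and must be argued rather than guessed.

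With $\mathcal{M}(K)$ in hand, the Euclidean criterion follows by solving $\mathcal{M}(K)<1$ in each case. For $-d\equiv 2,3\bmod 4$ the condition $\tfrac{1+d}{4}<1$ gives $d<3$, so $d\in\{2\}$ among the relevant residues, together with $d=1$; checking $d\equiv 2,3\bmod 4$ leaves $d=1,2$. For $-d\equiv 1\bmod 4$ the condition $\tfrac{(1+d)^2}{16d}<1$ rearranges to $(1+d)^2<16d$, i.e. $d^2-14d+1<0$, which holds for $d$ between the roots $7\pm4\sqrt3$, so $d\in\{3,7,11\}$ among the admissible square-free values with the correct residue. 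Collecting both cases yields $d\in\{1,2,3,7,11\}$, matching the statement. The final bookkeeping step is to verify that each candidate $d$ is square-free and lies in the correct congruence class, which is routine. The genuinely delicate part remains the covering-radius computation, so I would present the two lattice geometries carefully and defer the arithmetic of solving the quadratic inequality to a short concluding paragraph.
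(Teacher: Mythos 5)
Your proof is correct, but it cannot be compared line-by-line with the paper's, because the paper does not actually prove this proposition: its ``proof'' is a single citation to the authors' earlier work \cite{DBLP:lyu2018-ALLL}. What you have written is a self-contained version of the standard argument, and it is the natural one: identify $\mathcal{M}(K)$ with the squared covering radius of $\mathcal{O}_K$ viewed as a lattice in $\mathbb{C}\cong\mathbb{R}^2$, locate the deep hole in each congruence class, and then solve $\mathcal{M}(K)<1$. Your computations check out: for $-d\equiv 2,3\bmod 4$ the deep hole $\tfrac12+\tfrac12\sqrt{-d}$ gives $\tfrac{1+d}{4}$, and for $-d\equiv 1\bmod 4$ the circumcenter of the triangle $\{0,\,1,\,\tfrac{1+\sqrt{-d}}{2}\}$ sits at $\tfrac12+\tfrac{d-1}{4d}\sqrt{-d}$ with squared circumradius $\tfrac{(1+d)^2}{16d}$; note that both deep holes lie in $K$ itself, which is exactly what justifies replacing $\max_{x\in K}$ by the covering radius over $\mathbb{C}$ (the paper states this equivalence as a remark, and your argument makes it legitimate). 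The geometric verification you flag as the delicate step does go through: the triangle $\{0,1,\tfrac{1+\sqrt{-d}}{2}\}$ is acute, so its circumcenter lies inside it and is a vertex of the Voronoi cell (the ``stretched hexagon'' of the paper's Figure 1), hence a genuine deep hole. Two small bookkeeping corrections: in the first case the residue classes for $d$ are $d\equiv 1,2\bmod 4$ (not $d\equiv 2,3\bmod 4$, which is the condition on $-d$), though your conclusion $d\in\{1,2\}$ is unaffected since $d<3$ forces it; and in the second case you should note $d=15$ is excluded by $d<7+4\sqrt3\approx 13.93$, which you implicitly do. In short, your proposal supplies a complete argument where the paper outsources one, using precisely the lattice geometry (Lemma~1, Figure~1) that the paper itself relies on elsewhere.
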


\begin{proof} See \cite{DBLP:lyu2018-ALLL}. \end{proof} For imaginary
quadratic fields, we may analytically extend the norm function to
all complex numbers using the absolute value. Moreover, we have $\max_{x\in K}\min_{q\in\mathcal{O}_{K}}|\norm_{K/\mathbb{Q}}(x-q)|=\max_{x\in\mathbb{C}}\min_{q\in\mathcal{O}_{K}}|x-q|^{2}$
as the maximum distance with respect to the absolute value is achieved
at a rational point. We say that $x\in\mathbb{C}$ is fully $\mathbb{Z}[\xi]$-reduced
if $|x|\leq|x-q|$ for all $q\in\mathbb{Z}[\xi]$.

\begin{lemma} Let $x\in\mathbb{C}$ be fully $\mathbb{Z}[\xi]$-reduced.
Then $|\Re(x)|\leq1/2,|\Im(x)|\leq\sqrt{d}/2$ if $\xi=\sqrt{-d}$
or $|\Re(x)|\leq1/2$, $|\Im(x)|\leq\frac{1}{\sqrt{d}}\left(-|\Re(x)|+\frac{1+d}{4}\right)$
if $\xi=\frac{1+\sqrt{-d}}{2}$. \end{lemma}

\begin{proof} Define the map $\phi(x+iy)=(x,y)$ for all $x+iy\in\mathbb{C}$.
Then $|x+iy|=\|(x,y)\|$. When $-d\equiv2,3\mod4$, $\mathbb{Z}[\xi]$
generates the lattice with basis $(1,0),(0,\sqrt{d})$, otherwise
$\mathbb{Z}[\xi]$ generates the lattice with basis $(1,0),(1/2,\sqrt{d}/2)$.
The bounds that form the fundamental region of these lattices correspond
to the bounds given in the proposition (see fig. 1 for reference).
\end{proof} 
\begin{figure}
\centering \includegraphics[width=0.8\columnwidth]{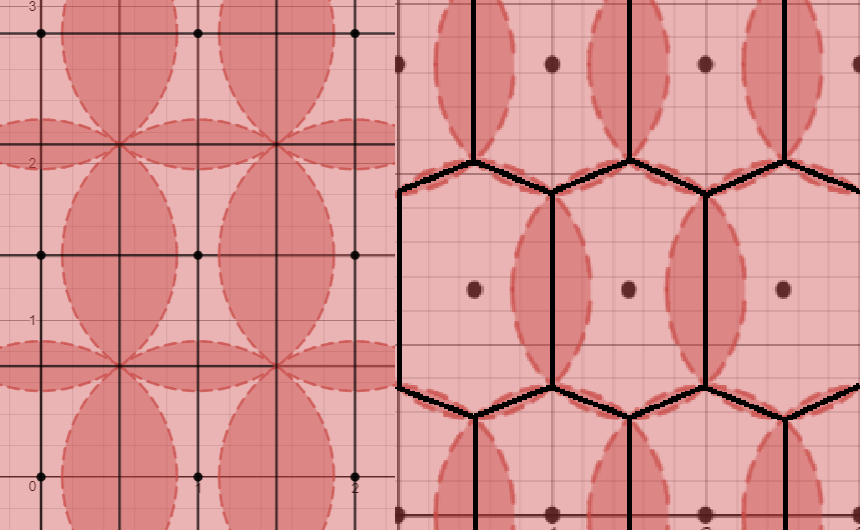}
\caption{Left: lattice generated by $\mathbb{Z}[\sqrt{-2}]$, tesselated by
rectangles, Right: lattice generated by $\mathbb{Z}[\frac{1+\sqrt{-7}}{2}]$,
tesselated by "stretched" hexagons.}
\label{fig:my_label} 
\end{figure}
A $\mathbb{Z}[\xi]$-module $\Lambda$ is an abelian group with a
binary operation over $\mathbb{Z}[\xi]$ that satisfies the axioms
of scalar multiplication. In general, modules need not have a basis,
and those that are are denoted free modules. A subset of $\Lambda$
forms a basis for $\Lambda$ if the basis is linearly independent
over $\mathbb{Z}[\xi]$ and also spans $\Lambda$ over $\mathbb{Z}[\xi]$.
We denote discrete free $\mathbb{Z}[\xi]$-submodules of $\mathbb{C}^{n}$
\emph{algebraic lattices}. Algebraic lattices can be expressed by
a basis $\mathbf{b}_{1},\dots,\mathbf{b}_{n}$ whose linear sum with
scalar multiplication over $\mathbb{Z}[\xi]$ span $\Lambda$.

\begin{definition} The $j$th successive minima of an algebraic lattice
is the smallest such number $\lambda_{j}$ such that the ball of radius
$\lambda_{j}$ (under an appropriate norm) contains $j$ linearly
independent lattice vectors over $\mathbb{Z}[\xi]$. \end{definition} 

\subsection{Classical Gauss's algorithm}

Aside from Euclid's famous greatest divisor algorithm, Gauss's lattice
reduction algorithm is one of the first examples of a lattice reduction
algorithm. Gauss defined the notion of a reduced basis over two dimensions
as the following. \begin{definition} An ordered basis $\{\mathbf{b}_{1},\mathbf{b}_{2}\}\in\mathbb{R}^{n}$
of a two dimensional lattice is reduced if $\|\mathbf{b}_{1}\|\leq\|\mathbf{b}_{2}\|\leq\|\mathbf{b}_{2}+p\mathbf{b}_{1}\|$
for all $p\in\mathbb{Z}$. \end{definition} The following algorithm
returns a reduced basis in the sense of Gauss. \IncMargin{1em}
\begin{algorithm}
\SetKwInOut{Input}{input}\SetKwInOut{Output}{output} \Input{An
ordered basis $\{\mathbf{b}_{1},\mathbf{b}_{2}\}\in\mathbb{R}^{n}$
of a two dimensional lattice spanned over $\mathbb{Z}$.} \Output{A
Gauss reduced basis.} \BlankLine \While{$\|\mathbf{b}_{1}\|<\|\mathbf{b}_{2}\|$}{$\mu_{12}=\langle\mathbf{b}_{1},\mathbf{b}_{2}\rangle/\|\mathbf{b}_{1}\|^{2}$\;
$\mathbf{b}_{2}=\mathbf{b}_{2}-\lfloor\mu_{12}\rceil\mathbf{b}_{1}$
\; swap $\mathbf{b_{1}},\mathbf{b_{2}}$} 
\end{algorithm}
\DecMargin{1em} Notice we get a basis whose Gram-Schmidt coefficients
round to zero, as such $|\mu_{12}|,|\mu_{21}|\leq1/2$. Then taking
an arbitrary vector in the lattice $\mathbf{v}=x\mathbf{b}_{1}+y\mathbf{b}_{2}$
where $x,y\in\mathbb{Z}$, we get 
\begin{align*}
\|\mathbf{v}\||^{2} & =x^{2}\|\mathbf{b}_{1}|^{2}+2xy\langle\mathbf{b}_{1},\mathbf{b}_{2}\rangle+y^{2}\|\mathbf{b}_{2}\|^{2}\\
 & \geq x^{2}\|\mathbf{b}_{1}\|^{2}-|xy|\|\mathbf{b}_{1}\|^{2}+y^{2}\|\mathbf{b}_{1}\|^{2}\\
 & =(x-y)^{2}\|\mathbf{b}_{1}\|^{2}+|xy|\|\mathbf{b}_{1}\|^{2}.
\end{align*}
If $x$ is nonzero, since $(x-y)^{2},|xy|\in\mathbb{N}$ and at least
one must be nonzero, $\mathbf{b}_{1}$ must be the shortest vector.
If $y$ is nonzero, letting $f(x,y)=(x-y)^{2}+|xy|$ we have 
\begin{align*}
\|\mathbf{v}\|^{2} & =y^{2}(\|\mathbf{b}_{2}\|^{2}-\|\mathbf{b}_{1}\|^{2})\\
 & +(x^{2}+y^{2})\|\mathbf{b}_{1}\|^{2}+2xy\langle\mathbf{b}_{1},\mathbf{b}_{2}\rangle\\
 & \geq y^{2}(\|\mathbf{b}_{2}\|^{2}-\|\mathbf{b}_{1}\|^{2})+f(x,y)\|\mathbf{b}_{1}\|^{2}\\
 & \geq(y^{2}-1)(\|\mathbf{b}_{2}\|^{2}-\|\mathbf{b}_{1}\|^{2})+\|\mathbf{b}_{2}\|^{2},
\end{align*}
and since $y$ is nonzero, $y^{2}-1\geq0$ so $\mathbf{b}_{2}$ corresponds
to the second successive minima for the lattice.

\section{ Algebraic Lattice reduction in two dimensions}

For our work, we use the complex Euclidean ($l_{2}$) norm to measure
the length of lattice vectors and the regular complex inner product.
Unlike algebraic lattices spanned over other rings, we do not need
to embed the ring structure before measuring the norm, as the Euclidean
norm already takes the complex conjugate into account. Define the
quantisation function $q_{K}:\mathbb{C}\to\mathbb{Z}[\xi]$ such that
$q_{K}(x)=\argmin_{\mu\in\mathbb{Z}[\xi]}|x-\mu|$. A specific definition
of how the quantisation function works can be found in \cite{DBLP:lyu2018-ALLL}.

Lagrange and Gauss have given the reduction criteria for a two dimensional
real basis. We first generalize this criteria to over complex quadratic
rings. \begin{definition} A basis $\mathbf{b}_{1},\mathbf{b}_{2}\in\mathbb{C}^{n}$
is Gauss reduced if $\|\mathbf{b}_{1}\|\leq\|\mathbf{b}_{2}\|\leq\|\mathbf{b}_{2}+p\mathbf{b}_{1}\|$
for all $p\in\mathbb{Z}[\xi]$. \end{definition} The following algorithm,
which is a special case of algebraic LLL in two dimensions, computes
a Gauss reduced basis. \\
 \IncMargin{1em} 
\begin{algorithm}
\SetKwInOut{Input}{input}\SetKwInOut{Output}{output} \Input{An
ordered basis $\{\mathbf{b}_{1},\mathbf{b}_{2}\}\in\mathbb{C}^{n}$
of a two dimensional algebraic lattice $\Lambda$ and a relevant ring
$\mathbb{Z}[\xi]$ that we want to reduce the basis over.} \Output{A
Gauss reduced basis.} \BlankLine \While{$\|\mathbf{b}_{1}\|<\|\mathbf{b}_{2}\|$}{$\mu_{12}=\langle\mathbf{b}_{1},\mathbf{b}_{2}\rangle/\|\mathbf{b}_{1}\|^{2}$\;
$\mathbf{b}_{2}=\mathbf{b}_{2}-q_{K}(\mu_{12})\mathbf{b}_{1}$ \;
swap $\mathbf{b_{1}},\mathbf{b_{2}}$} 
\end{algorithm}
\DecMargin{1em} \begin{theorem} Let $\mathbf{b}_{1},\mathbf{b}_{2}$
be an output basis of the algorithm above. Then $\|\mathbf{b}_{1}\|=\lambda_{1},\|\mathbf{b}_{2}\|=\lambda_{2}$
if $\mathbb{Z}[\xi]$ is the ring of integers of a norm-Euclidean
domain (i.e., $d=1,2,3,7,11$). \end{theorem}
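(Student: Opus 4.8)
The plan is to reduce the whole statement to a single scalar inequality and then to establish that inequality by exploiting $\mathcal{M}(K)<1$. By construction the returned pair is Gauss reduced in the sense of the definition above: relabelling so that $\|\mathbf{b}_{1}\|\le\|\mathbf{b}_{2}\|$, the coefficient $\mu:=\langle\mathbf{b}_{1},\mathbf{b}_{2}\rangle/\|\mathbf{b}_{1}\|^{2}$ satisfies $q_{K}(\mu)=0$, i.e.\ $|\mu|\le|\mu-q|$ for all $q\in\mathbb{Z}[\xi]$. For an arbitrary lattice vector $\mathbf{v}=x\mathbf{b}_{1}+y\mathbf{b}_{2}$ with $x,y\in\mathbb{Z}[\xi]$, the Gram--Schmidt splitting gives
\[
\|\mathbf{v}\|^{2}=|x+\mu y|^{2}\,\|\mathbf{b}_{1}\|^{2}+|y|^{2}\bigl(\|\mathbf{b}_{2}\|^{2}-|\mu|^{2}\|\mathbf{b}_{1}\|^{2}\bigr).
\]
Writing $t=\|\mathbf{b}_{1}\|^{2}/\|\mathbf{b}_{2}\|^{2}\in(0,1]$ and dividing through by $\|\mathbf{b}_{2}\|^{2}$, the right-hand side becomes affine in $t$, so its behaviour over $(0,1]$ is governed by the two endpoints: as $t\to0^{+}$ it tends to $|y|^{2}$, and at $t=1$ it equals
\[
h(x,y):=|x+\mu y|^{2}+(1-|\mu|^{2})|y|^{2}.
\]
A short check then shows that both assertions $\lambda_{1}=\|\mathbf{b}_{1}\|$ and $\lambda_{2}=\|\mathbf{b}_{2}\|$ follow once I prove $h(x,y)\ge1$ for every $y\neq0$ (the case $y=0$ being immediate from $|x|\ge1$, and recalling that $\mathbf{v}$ is $\mathbb{Z}[\xi]$-independent of $\mathbf{b}_{1}$ exactly when $y\neq0$).

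First I would dispatch the units. If $|y|=1$ then $y^{-1}=\bar{y}\in\mathbb{Z}[\xi]$, hence $\|\mathbf{v}\|=\|\mathbf{b}_{2}+(x\bar{y})\mathbf{b}_{1}\|\ge\|\mathbf{b}_{2}\|$ straight from Gauss reducedness; in normalised form this is $h(x,y)\ge1$, with equality at $(x,y)=(0,1)$, which is precisely what certifies that $\mathbf{b}_{2}$ attains $\lambda_{2}$. If $y$ is a non-unit then $|y|^{2}\ge2$ is a rational integer, and whenever $(1-|\mu|^{2})|y|^{2}\ge1$ the bound $h(x,y)\ge(1-|\mu|^{2})|y|^{2}\ge1$ is automatic. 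Since reducedness gives $|\mu|^{2}\le\mathcal{M}(K)$, this already closes the cases $d=1,3$, where $\mathcal{M}(K)\le\tfrac12$.

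The hard part is the complementary regime, and this is exactly where norm-Euclideanity is indispensable. A non-unit $y$ can fail $(1-|\mu|^{2})|y|^{2}\ge1$ only when $|y|^{2}<1/(1-\mathcal{M}(K))$; because $\mathcal{M}(K)<1$ for $d\in\{1,2,3,7,11\}$, this leaves only finitely many norms to inspect (for instance $|y|^{2}\in\{2,3\}$ for $d=2$, $|y|^{2}=2$ for $d=7$, and $|y|^{2}\in\{3,4,5\}$ for $d=11$), hence only finitely many $y$ up to the action of the units. For each such $y$ the best integer part is $x=-q_{K}(\mu y)$, so $\min_{x}|x+\mu y|^{2}=\min_{q\in\mathbb{Z}[\xi]}|\mu y-q|^{2}$, and the inequality to be verified reads
\[
\min_{q\in\mathbb{Z}[\xi]}|\mu y-q|^{2}\ \ge\ 1-(1-|\mu|^{2})|y|^{2}.
\]
I would establish this by substituting the explicit fundamental-domain bounds for $\mu$ from the lemma --- the rectangle $|\Re\mu|\le\tfrac12,\ |\Im\mu|\le\tfrac{\sqrt{d}}{2}$ when $\xi=\sqrt{-d}$, and the stretched hexagon $|\Re\mu|\le\tfrac12,\ |\Im\mu|\le\tfrac{1}{\sqrt{d}}(\tfrac{1+d}{4}-|\Re\mu|)$ when $\xi=\tfrac{1+\sqrt{-d}}{2}$ --- and minimising the piecewise-quadratic right-hand side over that compact region for each admissible $y$. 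This two-dimensional extremal computation in $\mu$ is the genuine obstacle; the condition $\mathcal{M}(K)<1$ is what keeps the set of exceptional $y$ finite and the inequality valid, whereas for non-Euclidean $d$ this set is infinite and $h$ can drop below $1$, matching the failure exhibited by the numerical examples.
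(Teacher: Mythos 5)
Your reduction is correct and genuinely different from the paper's route: the Gram--Schmidt splitting, the observation that $\|\mathbf{v}\|^{2}/\|\mathbf{b}_{2}\|^{2}$ is affine in $t=\|\mathbf{b}_{1}\|^{2}/\|\mathbf{b}_{2}\|^{2}$ so that both claims follow from $h(x,y)\ge 1$, the unit case via Gauss reducedness, and the cutoff $(1-|\mu|^{2})|y|^{2}\ge 1$ obtained from $|\mu|^{2}\le\mathcal{M}(K)$ are all sound, and they do settle $d=1,3$ completely. (The paper instead bounds $2\Re(\overline{p_{1}}p_{2}\langle\mathbf{b}_{1},\mathbf{b}_{2}\rangle)$ coordinatewise using the fundamental-domain bounds and reduces everything to positive definiteness of the integer quadratic forms $Q_{1},Q_{2}$, verified through explicit eigenvalues.)

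However, for $d=2,7,11$ your argument stops exactly where the theorem's content lies. The inequality
\[
\min_{q\in\mathbb{Z}[\xi]}|\mu y-q|^{2}\ \ge\ 1-(1-|\mu|^{2})|y|^{2}
\]
for the exceptional norms ($|y|^{2}\in\{2,3\}$ for $d=2$, $|y|^{2}=2$ for $d=7$, $|y|^{2}\in\{3,4,5\}$ for $d=11$) is asserted, and a method is sketched (``substitute the fundamental-domain bounds and minimise''), but the minimisation is never carried out; you yourself call it ``the genuine obstacle.'' This is not a verification that can be waved through, because the inequality is \emph{tight}: for $d=7$, taking $y=\frac{1+\sqrt{-7}}{2}$ and $\mu=\frac{1}{2}+\frac{3i}{2\sqrt{7}}$ (a corner of the fundamental hexagon, where $|\mu|^{2}=\mathcal{M}(K)=\frac{4}{7}$), one finds $\min_{q}|\mu y-q|^{2}=\frac{1}{7}=2|\mu|^{2}-1$, so both sides are equal; the same happens for $d=2$ with $y=1+\sqrt{-2}$, $\mu=\frac{1}{2}+\frac{i\sqrt{2}}{2}$ (both sides $\frac{1}{4}$) and for $d=11$ with $y=\frac{1+\sqrt{-11}}{2}$, $\mu=\frac{1}{2}+\frac{5i}{2\sqrt{11}}$ (both sides $\frac{5}{11}$). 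Consequently no slack estimate such as $|\mu|^{2}\le\mathcal{M}(K)$ alone can close these cases; one needs a genuine piecewise analysis over the fundamental region of which lattice point is nearest to $\mu y$, work of the same order as the paper's eigenvalue computation for $\mathbf{Q}_{1},\mathbf{Q}_{2}$. Until that computation is supplied, your proof establishes the theorem only for $d=1$ and $d=3$.
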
 \begin{proof}

We first show that the Gram-Schmidt coefficients are fully $\mathbb{Z}[\xi]$-reduced,
i.e. the GS coefficients of the output basis are rounded to zero.
Let $\mathbf{b}_{2}$ be the input vector in the last run of the algorithm
before $\mathbf{b}_{2},\mathbf{b}_{1}$ are output as the reduced
basis, and let $\mu_{12}$ be the GS coefficient between $\mathbf{b}_{1},\mathbf{b}_{2}$.
Then $q_{K}(\langle\mathbf{b}_{1},\mathbf{b}_{2}\rangle/\|\mathbf{b}_{1}\|^{2})=q_{K}(\frac{1}{\|\mathbf{b}_{1}\|^{2}}(\langle\mathbf{b}_{1},\mathbf{b}_{2}\rangle-q_{K}(\mu_{12})\|\mathbf{b}_{1}\|^{2}))=q_{K}(\epsilon)$,
where $\epsilon=\mu_{12}-q_{K}(\mu_{12})$ has already been fully
reduced, by the definition of the quantisation function. Since no
swap has occurred (since the basis has been output), $\|\mathbf{b}_{1}\|\leq\|\mathbf{b}_{2}\|$
so the same argument follows for the GS coefficient between $\mathbf{b}_{2},\mathbf{b}_{1}$.

To prove that $\|\mathbf{b}_{1}\|=\lambda_{1}$, we denote an arbitrary
lattice vector $\mathbf{v}=p_{1}\mathbf{b}_{1}+p_{2}\mathbf{b}_{2}$
where $p_{1},p_{2}\in\mathbb{Z}[\xi]$, and analyze its norm function:
\begin{equation}
\|\mathbf{v}\|^{2}=|p_{1}|^{2}\|\mathbf{b}_{1}\|^{2}+|p_{2}|^{2}\|\mathbf{b}_{2}\|^{2}+2\Re(\overline{p_{1}}p_{2}\langle\mathbf{b}_{1},\mathbf{b}_{2}\rangle).\label{prove1_v1}
\end{equation}

We examine the cases $-d\equiv1,2\mod4$ and $-d\equiv3\mod4$ separately.
When the chosen ring is in the form of $\xi=\sqrt{-d}$, we let $p_{1}=x+y\sqrt{-d},p_{2}=z+w\sqrt{-d}$
where $x,y,z,w\in\mathbb{Z}$. Then $\overline{p_{1}}p_{2}=(xz+dyw)+\sqrt{-d}(xw-yz)$,
and 
\begin{align*}
2\Re(\overline{p_{1}}p_{2}\langle\mathbf{b}_{1},\mathbf{b}_{2}\rangle) & =2(xz+dyw)\Re(\langle\mathbf{b}_{1},\mathbf{b}_{2}\rangle)\\
 & -2\sqrt{d}(xw-yz)\Im(\langle\mathbf{b}_{1},\mathbf{b}_{2}\rangle).
\end{align*}
Since the GS coefficients are fully reduced, we have: 
\[
\begin{cases}
2(xz+dyw)\Re(\langle\mathbf{b}_{1},\mathbf{b}_{2}\rangle)\geq-|xz+dyw|\|\mathbf{b}_{1}\|^{2},\\
-2\sqrt{d}(xw-yz)\Im(\langle\mathbf{b}_{1},\mathbf{b}_{2}\rangle)\geq-d|xw-yz|\|\mathbf{b}_{1}\|^{2}.
\end{cases}
\]
Based on this, the r.h.s. of Eq. (\ref{prove1_v1}) can be lower bounded:
\begin{align}
\|\mathbf{v}\|^{2}\geq Q_{1}'(x,y,z,w)\|\mathbf{b}_{1}\|^{2},\label{eq:lambda_1_case1}
\end{align}
where 
\begin{align*}
Q_{1}'(x,y,z,w) & \triangleq(x^{2}+dy^{2}+z^{2}+dw^{2}\\
 & -|xz+dyw|-d|xw-yz|).
\end{align*}
Letting $Q_1(x,y,z,w)\triangleq(x^{2}+dy^{2}+z^{2}+dw^{2} -(xz+dyw)-d(xw-yz))$, we note that the codomain of $Q_1'$ is a subset of the codomain of $Q_1$ (this can be seen by changing the signs of $x,y,z,w$ around until the functions are equivalent), showing positive-definiteness of $Q_1$ immediately yields that $Q_1'$ is also positive-definite. The 4-D symmetric matrix w.r.t. quadratic form $Q_{1}(x,y,z,w)$ can
be written as 
\[
\mathbf{Q}_{1}=\left[\begin{array}{cccc}
1 & 0 & -\frac{1}{2} & -\frac{d}{2}\\
0 & d & \frac{d}{2} & -\frac{d}{2}\\
-\frac{1}{2} & \frac{d}{2} & 1 & 0\\
-\frac{d}{2} & -\frac{d}{2} & 0 & d
\end{array}\right].
\]
The four eigenvalues of $\mathbf{Q}_{1}$ are: 
\[
\begin{cases}
\frac{d-\sqrt{5d^{2}-6d+9}+3}{4},\\
\frac{d+\sqrt{5d^{2}-6d+9}+3}{4},\\
\frac{3d-\sqrt{13d^{2}-6d+1}+1}{4},\\
\frac{3d+\sqrt{13d^{2}-6d+1}+1}{4}.
\end{cases}
\]
We therefore conclude that $\mathbf{Q}_{1}$ has four positive eigenvalues
and hence being positive definite with only $d=1,2$ in this case.
Along with $Q(x,y,z,w)\in\mathbb{Z}$, we arrive at $\|\mathbf{v}\|^{2}\geq\|\mathbf{b}_{1}\|^{2}$
when $d=1,2$.

When the chosen ring is in the form of $\xi=\frac{1+\sqrt{-d}}{2}$,
like before, letting $p_{1}=x+y\frac{1+\sqrt{-d}}{2},p_{2}=z+w\frac{1+\sqrt{-d}}{2}$,
we have $\overline{p_{1}}p_{2}=(xz+1/2(yz+xw)+\frac{1+d}{4}yw)+(\sqrt{-d}/2)(xw-yz)$.
Then 
\begin{align*}
2\Re(\overline{p_{1}}p_{2}\langle\mathbf{b}_{1},\mathbf{b}_{2}\rangle)= & 2(xz+1/2(yz+xw)\\& +\frac{1+d}{4}yw)\Re(\langle\mathbf{b}_{1},\mathbf{b}_{2}\rangle) \\ &-\sqrt{d}(xw-yz)\Im(\langle\mathbf{b}_{1},\mathbf{b}_{2}\rangle).
\end{align*}
Using the following inequality from the ``fully-reduced'' constraints:
$$
|\Im(x)|\leq\frac{1}{\sqrt{d}}\left(-|\Re(x)|+\frac{1+d}{4}\right),
$$
similarly to before, we obtain the inequality
\begin{align*}
    \|\mathbf{v}\|^2 &\geq (x^2 + xy +\frac{1+d}{4}y^2 + z^2 + zw + \frac{1+d}{4}w^2  \\ &- \frac{1+d}{4}|xw-yz|)\|\mathbf{b}_1\|^2 -|\Re(\langle \mathbf{b}_1,\mathbf{b}_2 \rangle)|\\ & *(|2xz + \frac{1+d}{2}yw + xw +yz| - |xw-yz|).
\end{align*}
Focusing on the term $(|2xz + \frac{1+d}{2}yw + xw +yz| - |xw-yz|)$, we note that one of the $xw,yz$ on the left hand term must annihilate with one on the right hand term, and one must sum to two times the variable (the choice of which does not matter for our case, as the overall function is symmetric in $xw,yz$). We choose $xw$ to annihilate and $yz$ to coalesce. Then clearly, all terms whose coefficient is $|\Re(\langle \mathbf{b}_1,\mathbf{b}_2 \rangle)|$ are negative, so the minimum is achieved at $|\Re(\langle \mathbf{b}_1,\mathbf{b}_2 \rangle)|=1/2 \|\mathbf{b}_1\|^2$. Once again, to show the above is greater than or equal to $\|\mathbf{b}_1\|^2$ for all $x,y,z,w$, we construct a ``larger'' quadratic form, $ Q_{2}(x,y,z,w)$, and show its positive-definiteness, where:
\begin{align*}
Q_{2}(x,y,z,w) & \triangleq(x^{2}+ xy +\frac{1+d}{4}y^{2}+z^{2}+zw\\ &+\frac{1+d}{4}w^{2}-\frac{1+d}{4}xw+\left(\frac{1+d}{4}-1\right)yz \\ &-xz -\frac{1+d}{4}yw.
\end{align*}
The symmetric matrix w.r.t. the quadratic form $Q_{2}(x,y,z,w)$ and its
corresponding eigenvalues are respectively: 
\[
\mathbf{Q}_{2}=\left[\begin{array}{cccc}
1 & 1/2 & -\frac{1}{2} & -\frac{1+d}{8}\\
1/2 & \frac{1+d}{4} & \frac{1}{2}\left(\frac{1+d}{4}-1\right) & -\frac{1+d}{8}\\
-\frac{1}{2} & \frac{1}{2}\left(\frac{1+d}{4}-1\right) & 1 & 1/2\\
-\frac{1+d}{8} & -\frac{1+d}{8} & 1/2 & \frac{1+d}{4}
\end{array}\right],
\]
\[
\begin{cases}
\frac{2D+2-\sqrt{9D^{2}-10D^{3}+10-4\frac{D^3-D^2+2}{\sqrt{D^2-2D+2}}}-\sqrt{D^2-2D+2}}{4},\\
\frac{2D+2+\sqrt{9D^{2}-10D^{3}+10-4\frac{D^3-D^2+2}{\sqrt{D^2-2D+2}}}-\sqrt{D^2-2D+2}}{4},\\
\frac{2D+2-\sqrt{9D^{2}-10D^{3}+10+4\frac{D^3-D^2+2}{\sqrt{D^2-2D+2}}}+\sqrt{D^2-2D+2}}{4},\\
\frac{2D+2+\sqrt{9D^{2}-10D^{3}+10+4\frac{D^3-D^2+2}{\sqrt{D^2-2D+2}}}+\sqrt{D^2-2D+2}}{4},
\end{cases}
\]
where $D=\frac{1+d}{4}$. Through checking the eigenvalues, it shows that $\mathbf{Q}_{2}$ is
positive definite when $d=3,7,11$; therefore $\|\mathbf{v}\|^{2}\geq\|\mathbf{b}_{1}\|^{2}$
is reached.

To prove that $\|\mathbf{b}_{2}\|=\lambda_{2}$, we leverage the technique
in \cite{Yao2002}. For both cases of $\xi$, we construct a vector
$p_{1}\mathbf{b}_{1}+p_{2}\mathbf{b}_{2}$ with $p_{1},p_{2}\in\mathbb{Z}[\xi]$,
$p_{2}\neq0$. When the chosen ring is in the form of $\xi=\sqrt{-d}$,
we have 
\begin{align*}
 & \|p_{1}\mathbf{b}_{1}+p_{2}\mathbf{b}_{2}\|^{2}=|p_{2}|^{2}(\|\mathbf{b}_{2}\|^{2}-\|\mathbf{b}_{1}\|^{2})\\
 & +(|p_{1}|^{2}+|p_{2}|^{2})\|\mathbf{b}_{1}\|^{2}+2\Re(\overline{p_{1}}p_{2}\langle\mathbf{b}_{1},\mathbf{b}_{2}\rangle)\\
 & \geq|p_{2}|^{2}(\|\mathbf{b}_{2}\|^{2}-\|\mathbf{b}_{1}\|^{2})+Q_{1}(x,y,z,w)\|\mathbf{b}_{1}\|^{2}\\
 & \geq(|p_{2}|^{2}-1)(\|\mathbf{b}_{2}\|^{2}-\|\mathbf{b}_{1}\|^{2})+\|\mathbf{b}_{2}\|^{2}\\
 & \geq\|\mathbf{b}_{2}\|^{2}.
\end{align*}
This shows $\mathbf{b}_{2}$ is the shortest lattice vector that is
independent of $\mathbf{b}_{1}$. The proof for the case $\xi=\frac{1+\sqrt{-d}}{2}$
follows the same way by replacing $Q_{1}(x,y,z,w)$ with $Q_{2}(x,y,z,w)$.
\end{proof}

\section{Numerical examples}

\textbf{Example 1 (Euclidean domain).} Consider the field $K=\mathbb{Q}\left(\sqrt{-3}\right)$
and its maximal ring of integers $\mathbb{Z}[\omega]$. Suppose the
input lattice basis is 
\[
\mathbf{B}=\left[\begin{array}{cc}
4+\omega & 1+4\omega\\
-1+5\omega & 1+2\omega
\end{array}\right].
\]
The algebraic reduction on this basis will consist of a swap, a size
reduction, and another swap, to yield the reduced basis 
\[
\tilde{\mathbf{B}}=\left[\begin{array}{cc}
-3+3\omega & 1+4\omega\\
2-3\omega & 1+2\omega
\end{array}\right],
\]
which satisfies $\left\Vert \tilde{\mathbf{b}}_{1}\right\Vert ^{2}=\lambda_{1}^{2}=16$,
and $\left\Vert \tilde{\mathbf{b}}_{2}\right\Vert ^{2}=\lambda_{2}^{2}=28$.
On the contrary, if we turn $\mathbf{B}$ into a real basis and perform
real LLL (whose Lovasz's parameter is $1$) on it, the square norm
of the reduced vectors are respectively $16$, $16$, $31$, and $28$.
In its reduced basis, the first two vectors are not independent over
$K$, and the second shortest vector is in the last position. In this
scenario only the Minkowski reduction on the real basis can have the
same effect as our algebraic lattice reduction, whose reduced vectors
respectively have square norms $16$, $16$, $28$, and $28$.

\noindent \textbf{Example 2 (non-Euclidean domain).} Consider the
field $K=\mathbb{Q}(\sqrt{-5})$ and its maximal ring of integers
$\mathbb{Z}[\sqrt{-5}]$. By Proposition 1, this field is an example
of a non-norm Euclidean field. We begin with the following basis:
\[
\mathbf{B}=\left[\begin{array}{cc}
2+3\sqrt{-5} & 8+\sqrt{-5}\\
2+\sqrt{-5} & 2
\end{array}\right].
\]
Performing algebraic reduction on this basis consists of a single
size reduction, resulting in the basis 
\[
\tilde{\mathbf{B}}=\left[\begin{array}{cc}
2+3\sqrt{-5} & 6-2\sqrt{-5}\\
2+\sqrt{-5} & -\sqrt{-5}
\end{array}\right].
\]

Such a basis is reduced in the sense of Gauss whose vectors have square
lengths of $58$ and $61$. However, running real LLL over the corresponding
four dimensional basis returns reduced vectors with respective square
lengths $20,30,26,39$. As such, we conclude that the algebraic Gauss's
algorithm does not guarantee an output that corresponds to the successive
minima of the lattice if the chosen field is not Euclidean.

\section{Closing remarks}

In this paper, we have shown that it is possible to successfully build
a polynomial time algorithm that returns a basis that corresponds to
the successive minima. However, we have not addressed the lattice
reduction problem for non-Euclidean imaginary quadratic domains. Indeed,
all the quadratic forms listed in this paper become non-positive definite
when the respective field is not Euclidean (this can be easily seen
by example), however this does not immediately imply that reduction
fails over these fields. In our second numerical example, we have
shown that our definition of Gauss reduction, although converges to
a ``reduced'' basis, returns a basis that is much larger than the
actual successive minima of the lattice. The first question that could
be addressed in further research is whether the algorithm is optimal
for any lattices spanned over non-Euclidean imaginary quadratic rings,
and if not, is it guaranteed that there exists a unimodular transformation
that maps any basis to a new basis that corresponds to the successive
minima of the lattice. In the event that the answer to the first question
is negative and the second is positive, does there exist a modified
algorithm (possibly also polynomial time) that is optimal over lattices
over the said non-Euclidean ring? Another area to explore is reduction
over ``trace-Euclidean'' domains, i.e. domains where, for all $x\in K$,
there exists a $q\in\mathcal{O}_{K}$ such that $|\trace_{K/\mathbb{Q}}((x-q)\overline{(x-q)})|<1$
(for imaginary quadratic fields, trace-Euclideanity is equivalent
to norm-Euclideanity). 

% Generated by IEEEtran.bst, version: 1.13 (2008/09/30)


\begin{thebibliography}{10}
\providecommand{\url}[1]{#1}
\csname url@samestyle\endcsname
\providecommand{\newblock}{\relax}
\providecommand{\bibinfo}[2]{#2}
\providecommand{\BIBentrySTDinterwordspacing}{\spaceskip=0pt\relax}
\providecommand{\BIBentryALTinterwordstretchfactor}{4}
\providecommand{\BIBentryALTinterwordspacing}{\spaceskip=\fontdimen2\font plus
\BIBentryALTinterwordstretchfactor\fontdimen3\font minus
  \fontdimen4\font\relax}
\providecommand{\BIBforeignlanguage}[2]{{%
\expandafter\ifx\csname l@#1\endcsname\relax
\typeout{** WARNING: IEEEtran.bst: No hyphenation pattern has been}%
\typeout{** loaded for the language `#1'. Using the pattern for}%
\typeout{** the default language instead.}%
\else
\language=\csname l@#1\endcsname
\fi
#2}}
\providecommand{\BIBdecl}{\relax}
\BIBdecl

\bibitem{Tunali2015}
N.~E. Tunali, Y.~Huang, J.~J. Boutros, and K.~R. Narayanan, ``Lattices over
  {E}isenstein integers for compute-and-forward,'' \emph{{IEEE} Trans. Inf.
  Theory}, vol.~61, no.~10, pp. 5306--5321, 10 2015.

\bibitem{DBLP:journals/tit/HuangNW18}
\BIBentryALTinterwordspacing
Y.~Huang, K.~R. Narayanan, and P.~Wang, ``Lattices over algebraic integers with
  an application to compute-and-forward,'' \emph{{IEEE} Trans. Information
  Theory}, vol.~64, no.~10, pp. 6863--6877, 2018.
\BIBentrySTDinterwordspacing

\bibitem{Vazquez-CastroO14}
M.~A.~V. Castro and F.~E. Oggier, ``Lattice network coding over euclidean
  domains,'' in \emph{22nd European Signal Processing Conference, {EUSIPCO}
  2014, Lisbon, Portugal, September 1-5, 2014}, 2014, pp. 1148--1152.

\bibitem{Kim2017}
T.~Kim and C.~Lee, ``Lattice reductions over euclidean rings with applications
  to cryptanalysis,'' in \emph{Proc. Cryptography and Coding - 16th {IMA}
  International Conference, {IMACC}, Oxford, UK, 2017}, ser. Lecture Notes in
  Computer Science, vol. 10655.\hskip 1em plus 0.5em minus 0.4em\relax
  Springer, 2017, pp. 371--391.

\bibitem{AlbrechtBD16}
M.~R. Albrecht, S.~Bai, and L.~Ducas, ``A subfield lattice attack on
  overstretched {NTRU} assumptions - cryptanalysis of some {FHE} and graded
  encoding schemes,'' in \emph{Advances in Cryptology - {CRYPTO} 2016 - 36th
  Annual International Cryptology Conference, Santa Barbara, CA, USA}, 2016,
  pp. 153--178.

\bibitem{Napias1996}
H.~Napias, ``A generalization of the {LLL}-algorithm over euclidean rings or
  orders,'' \emph{Journal de Th{\'{e}}orie des Nombres de Bordeaux}, vol.~8,
  no.~2, pp. 387--396, 1996.

\bibitem{Fieker1996}
C.~Fieker and M.~Pohst, ``On lattices over number fields,'' in
  \emph{Algorithmic Number Theory, Second International Symposium, ANTS-II,
  Talence, France, May 18-23, 1996, Proceedings}, ser. Lecture Notes in
  Computer Science, vol. 1122.\hskip 1em plus 0.5em minus 0.4em\relax Springer,
  1996, pp. 133--139.

\bibitem{Fieker2010}
C.~Fieker and D.~Stehl{\'{e}}, ``Short bases of lattices over number fields,''
  in \emph{Proc. Algorithmic Number Theory, 9th International Symposium,
  ANTS-IX, Nancy, France}, ser. Lecture Notes in Computer Science, vol.
  6197.\hskip 1em plus 0.5em minus 0.4em\relax Springer, 2010, pp. 157--173.

\bibitem{DBLP:lyu2018-ALLL}
\BIBentryALTinterwordspacing
S.~Lyu, C.~Porter, and C.~Ling, ``Performance limits of lattice reduction over
  imaginary quadratic fields with applications to compute-and-forward,'' in
  \emph{{IEEE} Information Theory Workshop, {ITW} 2018, Guangzhou, China},
  2018.
\BIBentrySTDinterwordspacing

\bibitem{Micciancio2002}
D.~Micciancio and S.~Goldwasser, \emph{Complexity of Lattice Problems}.\hskip
  1em plus 0.5em minus 0.4em\relax Boston, MA: Springer, 2002.

\bibitem{Yao2002}
H.~Yao and G.~W. Wornell, ``Lattice-reduction-aided detectors for {MIMO}
  communication systems,'' in \emph{Proc. Global Telecommunications Conference
  ({GLOBECOM}), Taipei, Taiwan, 2002}.\hskip 1em plus 0.5em minus 0.4em\relax
  {IEEE}, 2002, pp. 424--428.

\end{thebibliography}
\end{document}